\newcolumntype{L}{>{$}l<{$}}
\setlist[itemize]{leftmargin=*}
\setlist[enumerate]{leftmargin=*}
\DeclareMathOperator*{\argmax}{argmax}
\newcommand{\mA}{\mathcal{A}}
\newcommand{\mX}{\mathcal{X}}
\newcommand{\mN}{\mathcal{N}}
\newcommand{\upi}{\underline{\pi}}
\renewcommand{\hat}{\widehat}
\newcommand{\mP}{\mathbb{P}}
\newcommand{\mE}{\mathbb{E}}
\newtheorem{theorem}{Theorem}
\newtheorem{lemma}[theorem]{Lemma}
\newtheorem{corollary}[theorem]{Corollary}
\begin{document}

\newgeometry{tmargin=1in,lmargin=0.75in,rmargin=0.75in,bmargin=0.75in,}

\title{Structured Perfect Bayesian Equilibrium in Infinite Horizon Dynamic Games with Asymmetric Information}
\author{Abhinav Sinha and Achilleas Anastasopoulos}
\maketitle


\begin{abstract}
	In dynamic games with asymmetric information structure, the widely used concept of equilibrium is perfect Bayesian equilibrium (PBE). This is expressed as a strategy and belief pair that simultaneously satisfy sequential rationality and belief consistency.
	Unlike symmetric information dynamic games, where subgame perfect equilibrium (SPE) is the natural equilibrium concept, to date there does not exist a universal algorithm that decouples the interdependence of strategies and beliefs over time in calculating PBE.
	In this paper we find a subset of PBE for an infinite horizon discounted reward asymmetric information dynamic game. We refer to it as Structured PBE or SPBE; in SPBE, any agents' strategy depends on the public history only through a common public belief and on private history only through the respective agents' latest private information (his private type).
	The public belief acts as a summary of all the relevant past information and it's dimension does not increase with time. The motivation for this comes the common information approach proposed in Nayyar et al. (2013) for solving decentralized team (non-strategic) resource allocation problems with asymmetric information. 
	We calculate SPBE by solving a single-shot fixed-point equation and a corresponding forward recursive algorithm. We demonstrate our methodology by means of a public goods example.
	%
\end{abstract}

\section{Introduction}

Dynamic games with symmetric information among agents has been studied well in game theory literature~\cite{fudenbergtirole,basar1999,samuelson2006,krivan2014}. The appropriate equilibrium concept is subgame perfect equilibrium (SPE) which is a refinement of Nash equilibrium. A strategy profile is SPE if its restriction to any subgame of the original game is also an SPE of the subgame. However there are models of interest where strategic agents interact repeatedly whilst observing private signals. For instance, relay scheduling with private queue length information at different relays~\cite{vasal14}. Another example is that of Bayesian learning games~\cite{bikhchandani1992,acemoglu2011,vijay2014,vasal16} such as when customers post reviews on websites such as Amazon; clearly agents posses their own experience of the object they bought and this constitutes private information. These examples lead to a model with asymmetric information, where the notion of SPE is ineffective.

Appropriate equilibrium concepts in such a case consist of strategy profiles and beliefs. These include weak perfect Bayesian equilibrium, perfect Bayesian equilibrium (PBE) and Sequential equilibrium. PBE is the most commonly used notion. The requirement of PBE is sequential rationality of strategy as well as belief consistency. 
Due to the fact that future beliefs depend on past strategy and strategies are sequentially rational based on specific beliefs, finding a pair of strategy and belief that satisfy PBE requirement is a difficult problem and usually reduces to a large fixed-point equation over the entire time horizon.
To date, there is no universal algorithm that provides simplification by decomposing the aforementioned fixed-point equation for calculating PBEs.

Our motivation stems from the work of Nayyar et al.~\cite{nayyar2013}. Authors in~\cite{nayyar2013} consider a decentralized dynamic team problem (non-strategic agents) with asymmetric information. Each agents observes a private type which evolves with time as a controlled Markov process. They introduce a common information based approach, whereby each agent calculates a belief on every agents' current private type. This belief is defined in such a manner that despite having asymmetric information, agents can agree on it. They proceed to show optimality of policies that depend on private history only through this belief and respective agents' current private type.  

The common information based approach has been studied for finite horizon dynamic games with asymmetric information in~\cite{gupta2014,nayyar2014,vasal2015aa,vasal15b,vasal16}. Of these,~\cite{gupta2014,nayyar2014} consider models where the aforementioned belief from the common information approach is updated independent of strategy.
This implies that the simultaneous requirements of sequential rationality and belief consistency can be decoupled, resulting in a simplification in calculating PBEs. However these models produce non-signaling PBEs. For the models studied  in~\cite{vasal2015aa,vasal15b,vasal16} belief updates depend on strategies. Consequently the resulting PBEs are signaling equilibria. The problem of finding them though, becomes more complicated.

Other than the common information based approach, Li et al.~\cite{shamma14} consider a finite horizon zero-sum dynamic game, where at each time only one agent out of the two knows the state of the system. The value of the game is calculated by formulating an appropriate linear program. 

All works listed above for dynamic games consider a finite horizon. In this paper, we deal with an infinite horizon discounted reward dynamic game.
Cole et al.~\cite{cole2001} consider an infinite horizon discounted reward dynamic game where actions are only privately observable. They provide a fixed-point equation for calculating a subset of sequential equilibrium, which is referred to as Markov private equilibrium (MPrE). In MPrE strategies depend on history only through the latest private observation. 

\restoregeometry

\subsection*{Contributions}

In this work we follow the models of~\cite{vasal2015aa,vasal15b,vasal16} where beliefs are updated dependent on strategy and thus signaling PBEs are considered.

This paper provides a one-shot fixed-point equation and a forward recursive algorithm that together generate a subset of perfect Bayesian equilibria for the infinite horizon discounted reward dynamic game with asymmetric information. The common information based approach of~\cite{nayyar2013} is used and the strategies are such that they depend on the public history only through the common belief (which summarizes the relevant information) and on private history only through respective agents' current private type. The PBE thus generated are referred to as Structured PBE or SPBE and posses the property that the mapping between belief and strategy is stationary (w.r.t. time).

The provided methodology (consisting of solving a one-shot fixed-point equation) provides a decomposition of the interdependence between belief and strategy in PBEs and enables a systematic evaluation of SPBEs.

The model allows for signaling amongst agents as beliefs depend on strategies. This is because the source of asymmetry of information in our model is that each agent has a privately observed type that affects the utility function of all the agents. These types evolve as a controlled Markov process and depend on actions of all agents (the actions are assumed to be common knowledge). 

Our methodology for proving our results is as follows: first we extend the finite horizon model of~\cite{vasal2015aa} to include belief-based terminal reward. This does not change the proofs significantly but helps us simplify the exposition and proofs of the infinite horizon case. In the next step, the infinite horizon results are obtained with the help of the finite horizon ones through continuity arguments as horizon $ T \rightarrow \infty $.

We demonstrate our methodology of finding SPBEs through a concrete public goods game.


The remainder of this paper is organized as follows: Section~\ref{secmodel} introduces the model for the dynamic game with private types and discounted reward. We consider two versions of the problem, finite and infinite horizon. Section~\ref{secfh} defines the backward recursive algorithm for calculating SPBEs in the finite horizon model. The results in this section are an adaptation of the finite horizon results from~\cite{vasal2015aa}. This is done so that the finite horizon results can be appropriately used in Section~\ref{secih}.  Section~\ref{secih} contains the central result of this paper, namely the fixed-point equation that defines the set of SPBE strategy for the infinite horizon game. Finally, Section~\ref{secexample} discusses  a concrete example of a public goods game with two agents. This example is an infinite horizon version of the finite horizon public goods example from~\cite[ch.~8, Example~8.3]{fudenbergtirole}.

\section{Model for Dynamic Game} \label{secmodel}

We consider a dynamical system with $ N $ strategic agents, denoted by $ \mN $. Associated with each agent $ i $ at any time $ t \ge 1 $, is a type $ x_t^i \in \mX^i $. The set of types is denoted by $ \mX^i $ and is assumed to be finite. We assume that each agent $ i $ can observe their own type $ x_t^i $ but not that of others. Denote the profile of types at time $ t $ by $ x_t = (x_t^i)_{i\in \mN} $.

Each agent $ i $ at any time $ t \ge 1 $, after observing their private type $ x_t^i $ takes action $ a_t^i \in \mA^i $. The set of available actions to agent $ i $ is denoted by $ \mA^i $ and is assumed to be finite. It is assumed that the action profile is publicly observed and is common knowledge.

The type of each agent evolves over time as a controlled Markov process independent of other agents given the action profile $ a_t = (a_t^i)_{i\in \mN} $. We assume a time-homogeneous kernel $ Q^i $ for evolution of agent $ i $'s type i.e., $ \mP(x_{t+1} \mid x_{1:t},a_{1:t}) = \mP(x_{t+1} \mid x_{t},a_{t}) = \prod_{i=1}^N Q^i(x_{t+1}^i \mid x_t^i,a_t) $. The notation we use is $ a_{1:t} = (a_k)_{k=1,\ldots,t} $. Initial belief, at time $ 1 $, is assumed as $ \mP(x_1) = \prod_{i=1}^N Q_0^i(x_1^i) $.

Associated with each agent $ i $ is a reward function $ R^i $ that depends on the type and action profiles i.e., reward at time $ t $ for agent $ i $ is $ R^i(x_t,a_t) $. Furthermore, rewards are accumulated over time in a discounted manner with a common discount factor $ \delta \in (0,1) $. We consider two versions of the problem - finite horizon $ (T) $ and infinite horizon. For the finite horizon problem we introduce a terminal reward. We first define beliefs $ \upi_t $ since the terminal reward is defined as a function of beliefs.



Define the public history $ h_t^c $ at time $ t $ as the set of publicly observed actions i.e., $ h_t^c = a_{1:t-1} \in (\times_{i=1}^N \mA^i)^{t-1} $. Denote the set of public histories by $ \mathcal{H}_t^c $. Define private history $ h_t^i $ of any agent $ i $ at time $ t $ as the set of privately observed types $ x_{1:t}^i \in (\mX^i)^t $ of agent $ i $ and publicly observed action profiles $ a_{1:t-1} $ i.e., $ h_t^i = (x_{1:t}^i,a_{1:t-1}) $. Denote the set of private histories by $ \mathcal{H}_t^i $. Note that private history $ h_t^i $ contains public history $ h_t^c $. Also denote the overall history at time $ t $ by $ h_t = (x_{1:t},a_{1:t-1}) $ and the set of all such histories by $ \mathcal{H}_t $.

Any strategy used by agent $ i $ can be represented as $ g^i = (g_t^i)_{t \ge 1} $ with $ g_t^i: \mathcal{H}_t^i \rightarrow \Delta(\mA^i) $. The notation we use is: $ \Delta(S) $ is the set of all probability distributions over the elements of the set $ S $.

The belief $ \upi_t \in \times_{j=1}^N \Delta(\mX^j) $, at any time $ t $, depends on the strategy profile $ g_{1:t-1} = (g_{1:t-1}^j)_{j \in \mN} $. Given a strategy profile $ g_{1:t-1} $, we define the belief as follows: $ \upi_t(x_t) = \big( \pi_t^i(x_t^i) \big)_{i\in \mN} $, with $ \pi_t^i(x_t^i) = \mP^g(x_t^i \mid h_t^c) $ if $ \mP^g(h_t^c) > 0 $, else
\begin{gather}\label{eqpidef}
\pi_t^i(x_t^i) = \sum_{x_{t-1}^i \in \mX^i} Q^i(x_t^i \mid x_{t-1}^i,a_{t-1}) \pi_{t-1}^i(x_{t-1}^i).   
\end{gather}
Initial belief is $ \pi_1^i(x_1^i) = Q_0^i(x_1^i) $. The above definition ensures that the belief random variable $ \Pi_t $, at any time $ t $, is measurable, even if the set of histories resulting in any instance $ \pi_t $ has zero measure under strategy $ g $.

Finally, in the finite horizon game, for each agent $ i $ there is a terminal reward $ G^i(\pi_{T+1},x_{T+1}^i) $ that depends on the terminal type of agent $ i $ and the terminal belief. It is assumed that $ G^i(\cdot) $ is absolutely bounded.





\section{Finite Horizon} \label{secfh} 

In this section, we state and prove properties of the finite horizon backwards recursive algorithm for calculating SPBE, adopted from~\cite{vasal2015aa} with minor modifications to accommodate for terminal rewards (specifically belief-based). The results from this section are used in Section~\ref{secih} to prove the infinite horizon equilibrium result.

To distinguish quantities defined in this section with infinite horizon, we add a superscript $ T $ to all the quantities. 

\subsection{Finite Horizon problem and Backwards Recursion} \label{secfhalg}

Consider a finite horizon, $ T > 1 $, problem in the above dynamic game model. 


Define the value functions $ \big(V_t^{i,T}: \times_{j \in \mN} \Delta(\mX^j) \times \mX^i \rightarrow \mathbb{R} \big)_{i \in \mN,t\in \{ 1,\ldots,T \} } $ and strategy $ \big( \tilde{\gamma}_t^{i,T} \big)_{i \in \mN,t\in \{ 1,\ldots,T \} } $ backwards inductively as follows. 

\begin{enumerate}
	\item  $ \forall  $ $ i \in \mN $, set $ V_{T+1}^{i,T} \equiv G^i $.
	
	\item For any $ t \in \{1,\ldots,T\} $ and $ \upi_t $, solve the following fixed-point equation in $ (\tilde{\gamma}_t^{i,T})_{i\in \mN} $. $ \forall $ $ i \in \mN $, $ x^i \in \mX^i $, 
	\begin{multline}\label{eqfhalg}
	\tilde{\gamma}_t^{i,T}(\cdot \mid x_t^i) 
	\in \!\!\! \argmax_{\gamma_t^i(\cdot \mid x_t^i) \in \Delta(\mA^i)} \!\!\! \mE^{\gamma_t^i(\cdot \mid x_T^i),\tilde{\gamma}_t^{-i,T},\upi_t^{-i}} \big[ R^i(X_t,A_t) 
	\\
	+ \delta V_{t+1}^{i,T}\big( \big[F(\pi_t^j, \tilde{\gamma}_t^{j,T}, A_t)\big]_{j=1}^N , X_{t+1}^i \big) \mid \upi_t,x_t^i \big]
	\end{multline} 
	(see below for the various quantities involved in the above expression).
	\item Then define,
	\begin{multline} \label{eqfhalg2}
	V_t^{i,T}(\upi_t,x_t^i) 
	= \mE^{\tilde{\gamma}_t^{i,T}(\cdot \mid x_t^i),\tilde{\gamma}_t^{-i,T},\upi_t^{-i}} \big[ R^i(X_t,A_t) 
	\\
	+ \delta V_{t+1}^{i,T}\big( \big[F(\pi_t^j, \tilde{\gamma}_t^{j,T}, A_t)\big]_{j=1}^N , X_{t+1}^i \big) \mid \upi_t,x_t^i \big]
	\end{multline}
	
\end{enumerate}

\vspace{3ex}


Denote by $ \theta_t^i $ the mapping $ \upi_t \mapsto \tilde{\gamma}_t^{i,T}  $ i.e., $ \tilde{\gamma}_t^{i,T} = \theta_t^i[\upi_t] $. 

In~\eqref{eqfhalg}, the expectation is with the following distribution
\begin{multline}
(X_t^{-i},A_t^i,A_t^{-i},X_{t+1}^{i}) \sim \pi_t^{-i}(x_t^{-i})\gamma_t^{i,T}(a_t^i \mid x_t^i) 
\\
\tilde{\gamma}_t^{-i,T}(a_t^{-i} \mid x_t^{-i}) Q^i(x_{t+1}^{i} \mid x_t^i,a_t).
\end{multline}
and
\begin{gather} 
\nonumber 
F(\pi^j,\gamma^j,a)(x^{\prime j}) 
\\
\label{eqf}
= \left\{ \!\!\!\!
\begin{array}{ll}
\displaystyle\frac{\sum_{x^j \in \mX^j} \pi^j(x^j) \gamma^j(a^j \mid x^j) Q^j(x^{\prime j} \mid x^j,a)}{\sum_{\tilde{x}^j} \pi^j(\tilde{x}^j) \gamma^j(a^j \mid \tilde{x}^i)}  \!\!\! & \mbox{if } \text{Den.} > 0 
\\[2ex]
\sum_{x^j \in \mX^j} \pi^j(x^j) Q^j(x^{\prime j} \mid x^j,a) \!\!\! & \mbox{if } \text{Den.} = 0
\end{array}
\right.
\end{gather} 




Below we define strategy-belief pair $ (\beta^\star,\mu^\star) $,
\begin{subequations}
	\begin{align}
	\beta^\star = \big( \beta_t^{i,\star} \big)_{t \in \{1,\ldots,T\},i \in \mN}  \qquad  \beta_t^{i,\star}:\mathcal{H}_t^i \rightarrow \Delta(\mA^i) \\
	\mu^\star = \big( \mu_t^{i,\star} \big)_{t \in \{1,\ldots,T\},i \in \mN}  \qquad  \mu_t^{i,\star}:\mathcal{H}_t^i \rightarrow \Delta(\mathcal{H}_t) 
	\end{align}
\end{subequations} 
based on the mapping $ \theta $ produced by the above algorithm. 

Define belief $ \mu^\star $ inductively as follows: set $ \mu_1^\star(x_1) = \prod_{i=1}^N Q^i_0(x_1^i) $. Then for $ t \in \{1,\ldots,T\} $,
\begin{subequations} \label{eqbelfh}
	\begin{align}
	\mu_{t+1}^{i,\star}\big[ h_{t+1}^c \big] 
	&= F\big( \mu_{t}^{i,\star}[h_t^c], \theta_t^i\big[\mu_{t}^{\star}[h_t^c]\big], a_{t} \big)
	\\
	\mu_{t+1}^\star\big[ h_{t+1}^c \big](x_{t+1}) 
	&= \prod_{i=1}^N \mu_{t+1}^{i,\star}\big[ h_{t+1}^c \big](x_{t+1}^i)
	\end{align}
\end{subequations}
Denote the strategy arising out of $ \tilde{\gamma}_t^{i,T} $ by $ \beta^{i,\star}_t $ i.e.,
\begin{gather} \label{eqfhbeta}
\beta_t^{i,\star}(a_t^i \mid h_t^i) = \theta_t^i\big[\mu_t^{\star}[h_t^c]\big](a_t^i \mid x_t^i)
\end{gather}

\subsection{Finite Horizon Result}

In this section, we presented three lemmas. The first two are technical results needed in the proof of the third. The result of the third lemma is used in Section~\ref{secih}.

Define the reward-to-go $ W_t^{i,\beta^i,T} $ for any agent $ i $ and strategy $ \beta^i $  as
\begin{multline} \label{eqr2gfh}
W_t^{i,\beta^i,T}(h_t^i) = \mE^{\beta^i,\beta^{-i,\star},\mu_t^\star[h_t^c]} \big[ \sum_{n=t}^T \delta^{n-t} R^i(X_n,A_n) 
\\
+ \delta^{T+1-t} G^{i}(\underline{\Pi}_{T+1},X^i_{T+1}) \mid h_t^i \big].
\end{multline}
Here agent $ i $'s strategy is $ \beta^i $ whereas all other agents use strategy $ \beta^{-i,\star} $ defined above. Since $ \mX^i,\mA^i $ are assumed to be finite and $ G^i $ absolutely bounded, the reward-to-go is finite $ \forall $ $ i,t,\beta^i,h_t^i $.

\begin{lemma}\label{thmfh1}
	For any $ t \in \{1,\ldots,T\} $, $ i \in \mN $, $ h_t^i $ and $ \beta^i $,	
	\begin{multline} \label{eqintlem}
	V_t^{i,T}(\mu_t^\star[h_t^c],x_t^i) \ge \mE^{\beta^i,\beta^{-i,\star},\mu_t^\star[h_t^c]} \big[ R^i(X_t,A_t) 
	\\
	+ \delta V_{t+1}^{i,T}\big( \big[F(\mu_t^{j,\star}[h_t^c],\beta_t^{j,\star},A_t)\big]_{j=1}^N , X_{t+1}^i \big) \mid h_t^i \big]
	\end{multline}
\end{lemma}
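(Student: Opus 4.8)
The plan is to recognize the right-hand side of~\eqref{eqintlem} as nothing more than the single-stage objective appearing inside the $\argmax$ of~\eqref{eqfhalg}, evaluated at a particular \emph{feasible} choice of $\gamma_t^i$ induced by the deviation $\beta^i$, and then to invoke the defining property of the $\argmax$ together with the definition of $V_t^{i,T}$ in~\eqref{eqfhalg2}. Concretely, I would fix $t,i,h_t^i,\beta^i$ and set $\upi_t = \big(\mu_t^{j,\star}[h_t^c]\big)_{j\in\mN}$, so that $\pi_t^j = \mu_t^{j,\star}[h_t^c]$ and $x_t^i$ is the current type recorded in $h_t^i$. Since $\tilde{\gamma}_t^{i,T}$ is by construction a maximizer and $V_t^{i,T}(\upi_t,x_t^i)$ is the corresponding optimal value, it suffices to exhibit the RHS of~\eqref{eqintlem} as the objective value at some admissible $\gamma_t^i(\cdot \mid x_t^i) \in \Delta(\mA^i)$.

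The first step is a bookkeeping identification. By~\eqref{eqfhbeta}, for every agent $j$ we have $\beta_t^{j,\star}(\cdot \mid h_t^j) = \theta_t^j\big[\mu_t^\star[h_t^c]\big](\cdot \mid x_t^j) = \tilde{\gamma}_t^{j,T}(\cdot \mid x_t^j)$, and consequently $F\big(\mu_t^{j,\star}[h_t^c],\beta_t^{j,\star},A_t\big) = F\big(\pi_t^j,\tilde{\gamma}_t^{j,T},A_t\big)$, so the argument of $V_{t+1}^{i,T}$ in~\eqref{eqintlem} coincides term-for-term with the one in~\eqref{eqfhalg}. The second step is matching the measures. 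I would note that under $\mE^{\beta^i,\beta^{-i,\star},\mu_t^\star[h_t^c]}[\,\cdot \mid h_t^i]$ the joint law of $(X_t^{-i},A_t^i,A_t^{-i},X_{t+1}^i)$ factorizes as
\begin{multline*}
\prod_{j \neq i}\mu_t^{j,\star}[h_t^c](x_t^j)\; \beta_t^{i}(a_t^i \mid h_t^i)\\
\times \prod_{j \neq i}\tilde{\gamma}_t^{j,T}(a_t^j \mid x_t^j)\; Q^i(x_{t+1}^i \mid x_t^i,a_t),
\end{multline*}
which is exactly the distribution specified below~\eqref{eqfhalg} once one sets $\gamma_t^i(\cdot \mid x_t^i) := \beta_t^i(\cdot \mid h_t^i)$, $\pi_t^{-i} = \prod_{j\neq i}\mu_t^{j,\star}[h_t^c]$, and $\tilde{\gamma}_t^{-i,T} = \big(\tilde{\gamma}_t^{j,T}\big)_{j\neq i}$. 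Here the crucial observation is that conditioning on the full private history $h_t^i$ collapses the (possibly history-dependent) deviation $\beta_t^i$ into a single fixed distribution $\beta_t^i(\cdot \mid h_t^i) \in \Delta(\mA^i)$, which is therefore a legitimate competitor in the pointwise $\argmax$ over $\Delta(\mA^i)$ taken at $x_t^i$.

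Combining the two steps, the RHS of~\eqref{eqintlem} equals the bracketed objective of~\eqref{eqfhalg} evaluated at $\gamma_t^i(\cdot \mid x_t^i) = \beta_t^i(\cdot \mid h_t^i)$; by the $\argmax$ property this is bounded above by the same objective at $\tilde{\gamma}_t^{i,T}$, which by~\eqref{eqfhalg2} is precisely $V_t^{i,T}(\upi_t,x_t^i) = V_t^{i,T}(\mu_t^\star[h_t^c],x_t^i)$, giving the claim. I expect the only genuinely delicate point to be the measure-matching in the second step: one must be careful that the belief $\mu_t^\star[h_t^c]$ carried in the superscript is exactly the marginal law used for $X_t^{-i}$ (so that no separate belief-consistency argument is needed here, the measure being imposed by the expectation operator rather than derived), and that the arbitrary dependence of $\beta^i$ on $x_{1:t-1}^i$ and $a_{1:t-1}$ is harmless once we condition on $h_t^i$. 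The rest is a direct appeal to the optimality built into the backward recursion.
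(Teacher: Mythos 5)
Your proof is correct and is essentially the paper's own argument, just presented directly rather than by contradiction: the paper assumes a violating deviation $\hat{\beta}^i$, builds the one-stage strategy $\hat{\gamma}_t^i$ equal to $\hat{\beta}_t^i(\cdot \mid \hat{h}_t^i)$ at $\hat{x}_t^i$ (uniform elsewhere), and invokes the $\argmax$ property of~\eqref{eqfhalg} with the value definition~\eqref{eqfhalg2} -- precisely your feasibility-plus-optimality step, with the strategy-extension device made unnecessary in your version by noting the $\argmax$ is pointwise in $x_t^i$. The measure-matching you single out as the delicate point (the belief being imposed by the expectation operator, and the conditioned deviation collapsing to a fixed element of $\Delta(\mA^i)$) is exactly the implicit content of the paper's first inequality and middle equality, so the two proofs coincide in substance.
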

\begin{proof}
	We use proof by contradiction. Suppose $ \exists $ $ i,t,\hat{h}_t^i,\hat{\beta}^i $ such that~\eqref{eqintlem} is violated. Construct strategy $ \hat{\gamma}_t^i $ as  
	\begin{gather} \label{eqnewst}
	\hat{\gamma}_t^i(a_t^i \mid x_t^i) = \left\{
	\begin{array}{ll}
	\hat{\beta}_t^i(a_t^i \mid \hat{h}_t^i)   & \mbox{if } x_t^i = \hat{x}_t^i 
	\\
	\frac{1}{\vert \mA^i \vert} & \mbox{if } x_t^i \ne \hat{x}_t^i
	\end{array}
	\right.
	\end{gather}	
	Then
	\begin{subequations}
		\begin{align}
		&V_t^{i,T}(\mu_t^\star[\hat{h}_t^c],\hat{x}_t^i) 
		\\
		&\ge \mE^{\hat{\gamma}_t^i(\cdot \mid \hat{x}_t^i),\beta_t^{-i,\star},\mu_t^\star[\hat{h}_t^c]} \big[ R^i(X_t,A_t)  \delta V_{t+1}^{i,T}
		\\ \nonumber
		&+ \big( \big[ F(\mu_t^{j,\star}[\hat{h}_t^c],\beta_t^{j,\star}(\cdot \mid \hat{h}_t^c ,\cdot),A_t) \big]_{j=1}^N, X_{t+1}^{i} \big) \mid \mu_t^\star[\hat{h}_t^c],\hat{x}_t^i \big]
		\\
		&= \mE^{\hat{\beta}_t^i,\beta_t^{-i,\star},\mu_t^\star[\hat{h}_t^c]} \big[ R^i(X_t,A_t)  + \delta V_{t+1}^{i,T}
		\\ \nonumber 
		&\big( \big[ F(\mu_t^{j,\star}[\hat{h}_t^c],\beta_t^{j,\star}(\cdot \mid \hat{h}_t^c ,\cdot),A_t) \big]_{j=1}^N, X_{t+1}^{i} \big) \mid \hat{h}_t^i \big]
		\\
		&> V_t^{i,T}(\mu_t^\star[\hat{h}_t^c],\hat{x}_t^i)
		\end{align}
	\end{subequations}
	The first inequality above follows from the algorithm definition in~\eqref{eqfhalg} and~\eqref{eqfhalg2}, the second equality follows from the definition above in~\eqref{eqnewst} and finally the last inequality follows from the assumption at the beginning of this proof. 
	
	Since the above is clearly a contradiction, the result follows.
\end{proof}

\begin{lemma} \label{lemcond}
	\begin{multline}
	\mE^{\beta^i_{t+1:T},\beta_{t+1:T}^{-i,\star},\mu_{t+1}^\star[h_t^c,a_t]} \big[ \sum_{n=t+1}^T \delta^{n-(t+1)} R^i(X_n,A_n) 
	\\
	+ \delta^{T+1-t} G^i(\Pi_{T+1},X_{T+1}^i) \mid h_t^i,a_t,x_{t+1}^i \big]
	\\
	= \mE^{\beta^i_{t:T},\beta_{t:T}^{-i,\star},\mu_{t}^\star[h_t^c]} \big[ \sum_{n=t+1}^T \delta^{n-(t+1)} R^i(X_n,A_n) 
	\\
	+ \delta^{T+1-t} G^i(\Pi_{T+1},X_{T+1}^i) \mid h_t^i,a_t,x_{t+1}^i \big]
	\end{multline}
\end{lemma}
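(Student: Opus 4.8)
The plan is to observe that the integrand is the \emph{same} measurable function of $(X_{t+1:T+1},A_{t+1:T})$ on both sides --- recall that $\Pi_{T+1}$ is a deterministic function of the public history $a_{1:T}$ through~\eqref{eqbelfh} --- and that both measures employ the identical strategy profile $(\beta^i_{t+1:T},\beta^{-i,\star}_{t+1:T})$ from time $t+1$ onward. Hence everything downstream (opponents' actions and all type transitions for $n \ge t+1$) is generated by the same kernels, and the two conditional expectations can differ only through the conditional law they assign to the ``initial condition'' $X_{t+1}^{-i}$ of this common forward process. So I would reduce the claim to showing that, under each of the two measures, the conditional distribution of $X_{t+1}^{-i}$ given $(h_t^i,a_t,x_{t+1}^i)$ coincides; the conditioning on $X_{t+1}^i=x_{t+1}^i$ is carried along but plays no role for the opponents' block once that law is fixed.

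For the left-hand side this is immediate. The measure $\mE^{\,\cdot\,,\mu_{t+1}^\star[h_t^c,a_t]}$ is by definition initialized at time $t+1$ with the product prior $\mu_{t+1}^\star[h_t^c,a_t]$ on $X_{t+1}$, so conditioning on the (independent) component $X_{t+1}^i=x_{t+1}^i$ leaves $X_{t+1}^{-i}\sim \mu_{t+1}^{-i,\star}[h_t^c,a_t]$.

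The substance is the right-hand side, where I would compute the posterior explicitly from the conditional-independence structure of the model. Under $\mE^{\,\cdot\,,\mu_t^\star[h_t^c]}$ the prior on $X_t$ is the product $\mu_t^\star[h_t^c]$, so $X_t^{-i}\sim \mu_t^{-i,\star}[h_t^c]$ is independent of $x_t^i$; the opponents draw $A_t^{-i}\sim \theta_t^{-i}[\mu_t^\star[h_t^c]](\cdot\mid X_t^{-i})$; and types transition independently across agents via $Q^{-i}(\cdot\mid X_t^{-i},a_t)$. Since agent $i$'s action $A_t^i$ is a function of $h_t^i$ alone and $X_{t+1}^i$ depends only on $(x_t^i,a_t)$, both $a_t^i$ and $x_{t+1}^i$ are conditionally independent of $X_{t+1}^{-i}$ given $a_t^{-i}$, so conditioning on them adds no information and $\mP(X_{t+1}^{-i}=\cdot \mid h_t^i,a_t,x_{t+1}^i)=\mP(X_{t+1}^{-i}=\cdot \mid a_t^{-i})$. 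A one-line Bayes-plus-propagation step (which factorizes across the opponents by cross-agent type independence) then identifies this posterior with $F(\mu_t^{-i,\star}[h_t^c],\theta_t^{-i}[\mu_t^\star[h_t^c]],a_t)$, which by the belief recursion~\eqref{eqbelfh} is exactly $\mu_{t+1}^{-i,\star}[h_t^c,a_t]$ --- matching the left-hand side and closing the argument.

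The main obstacle I anticipate is the careful bookkeeping of these conditional independencies, in particular justifying that conditioning on agent $i$'s own action $a_t^i$ and next type $x_{t+1}^i$ is uninformative about the opponents' next type $X_{t+1}^{-i}$ given $a_t^{-i}$. A secondary but unavoidable point is the degenerate case $\mP(a_t^{-i})=0$ under the right-hand measure, where the naive Bayes ratio is undefined; there one must invoke the second branch in the definition of $F$ in~\eqref{eqf} to verify that both sides still agree on $\mu_{t+1}^{-i,\star}[h_t^c,a_t]$, consistent with the measurability convention introduced for $\Pi_t$ in~\eqref{eqpidef}.
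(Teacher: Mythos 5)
Your proof is correct and takes essentially the same route as the paper's: the paper also establishes the identity by showing that the joint conditional law of $\big(X_{t+1}^{-i},A_{t+1:T},X_{t+2:T},X_{T+1}^i,\Pi_{T+1}\big)$ given $(h_t^i,a_t,x_{t+1}^i)$ coincides under the two measures, via a Bayes-rule (numerator/denominator) computation in which agent $i$'s strategy $\beta_t^i$ and kernel $Q^i(x_{t+1}^i \mid x_t^i,a_t)$ cancel, cross-agent independence of transitions is invoked, and the surviving factor is recognized as the update $F$ from~\eqref{eqbelfh}, exactly as in your ``Bayes-plus-propagation'' step. Your packaging --- first reducing to the initial-condition law of $X_{t+1}^{-i}$ since the downstream kernels coincide, then matching that law, and additionally flagging the zero-probability branch of $F$ in~\eqref{eqf} --- is the same argument organized in two explicit stages rather than one joint computation.
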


\begin{proof}
	This result relies on the structure of the update in~\eqref{eqbelfh}, specifically that $ \mu_{t+1}^{-i,\star}[h_{t+1}^c] $ is a deterministic function of $ \mu_t^{-i,\star}[h_t^c] $, $ \beta_t^{-i,\star} $, $ a_t $ and does not depend on $ \beta^i $.  
	
	Consider the  joint pmf-pdf of random variables involved in the expectation
	\begin{multline} \label{eqnumden2}
	\mP^{\beta_{t:T}^i,\beta_{t:T}^{-i,\star},\mu_t^\star[h_t^c] }\big( x_{t+1}^{-i},a_{t+1:T}, x_{t+2:T},x_{T+1}^i,
	\\
	\pi_{T+1} \mid h_t^i,a_t,x_{t+1}^i \big) 
	\end{multline}
	This can be written as $ \frac{A}{B} $ with
	\begin{subequations} \label{eqnumden}
		\begin{multline}
		A = \sum_{x_t^{-i}} \mP^{\beta_{t:T}^i,\beta_{t:T}^{-i,\star},\mu_t^\star[h_t^c]} \big( x_t^{-i},a_t,x_{t+1},a_{t+1:T},x_{t+2:T},
		\\
		x_{T+1}^i,\pi_{T+1} \mid h_t^i \big) 
		\end{multline} 
		\begin{gather} 
		B = \sum_{\tilde{x}_t^{-i}} \mP^{\beta_{t:T}^i,\beta_{t:T}^{-i,\star},\mu_t^\star[h_t^c]}\big( \tilde{x}_t^{-i}, a_t, x_{t+1}^i \mid h_t^i \big)  
		\end{gather}
	\end{subequations} 
	
	Using causal decomposition we can write
	\begin{subequations}
		\begin{multline}
		A = \sum_{x_t^{-i}} \mP^{\beta_{t:T}^i,\beta_{t:T}^{-i,\star},\mu_t^\star[h_t^c]}(x_t^{-i} \mid h_t^{i}) \beta_t^i(a_t^i \mid h_t^i) 
		\\
		\beta_t^{-i,\star}(a_t^{-i} \mid h_t^c,x_t^{-i}) Q(x_{t+1} \mid x_t,a_t) 
		\\
		\mP^{\beta_{t:T}^i,\beta_{t:T}^{-i,\star},\mu_t^\star[h_t^c]}( a_{t+1:T},x_{t+2:T},x_{T+1}^i,\pi_{T+1} 
		\\
		\mid h_t^i,a_t,x_t^{-i},x_{t+1} )
		\end{multline}
		\begin{multline} 
		= \sum_{x_t^{-i}} \mu_t^{-i,\star}[h_t^c](x_t^{-i}) \beta_t^i(a_t^i \mid h_t^i) \beta_t^{-i,\star}(a_t^{-i} \mid h_t^c,x_t^{-i}) 
		\\
		Q^i(x_{t+1}^i \mid x_t^i,a_t) Q^{-i}(x_{t+1}^{-i} \mid x_t^{-i},a_t)
		\\
		\mP^{\beta_{t+1:T}^i,\beta_{t+1:T}^{-i,\star},\mu_{t+1}^\star[h_{t+1}^c]}( a_{t+1:T},x_{t+2:T},x_{T+1}^i,\pi_{T+1} 
		\\
		\mid h_t^i,a_t,x_t^{-i},x_{t+1} )
		\end{multline} 
	\end{subequations}
	where the second equality follows from the fact that given $ h_t^i,a_t,x_t^{-i},x_{t+1} $ and $ \mu_t^\star[h_t^c] $, the probability of $ (a_{t+1:T},x_{t+2:T},x_{T+1}^i,\pi_{T+1} ) $ depends on $ h_t^i,a_t,x_t^{-i},x_{t+1},\mu_{t+1}^\star[h_{t+1}^c] $ only through $ \beta_{t+1:T}^i,\beta_{t+1:T}^{-i,\star} $. Also the second equality above uses the fact that types evolve conditionally independent given action. Performing similar decomposition of the denominator $ B $ and substituting back in the expression from~\eqref{eqnumden2} allows us to cancel the terms $ \beta^i(\cdot) $ and $ Q^i(\cdot) $. Using the belief update from~\eqref{eqbelfh}, this gives that the expression in~\eqref{eqnumden2} is 
	\begin{subequations}
		\begin{multline}
		\mu_{t+1}^{-i,\star}[h_{t+1}^c](x_{t+1}^{-i}) \mP^{\beta_{t+1:T}^i,\beta_{t+1:T}^{-i,\star},\mu_{t+1}^\star[h_{t+1}^c]}( a_{t+1:T},x_{t+2:T},
		\\
		x_{T+1}^i,\pi_{T+1} \mid h_t^i,a_t,x_{t+1} )
		\\
		= \mP^{\beta_{t+1:T}^i,\beta_{t+1:T}^{-i,\star},\mu_{t+1}^\star[h_{t+1}^c]}( x_{t+1}^{-i}, a_{t+1:T},x_{t+2:T},
		\\
		x_{T+1}^i,\pi_{T+1} \mid h_t^i,a_t,x_{t+1}^i )
		\end{multline} 	
	\end{subequations}  
	The above equality follows directly from definition.
	
	This completes the proof.
\end{proof}

The result below shows that the value function from the backwards recursive algorithm is higher than any reward-to-go.

\begin{lemma}\label{thmfh2}
	For any $ t \in \{1,\ldots,T\} $, $ i \in \mN $, $ h_t^i $ and $ \beta^i $,	
		\begin{gather}
		V_t^{i,T}(\mu_t^\star[h_t^c],x_t^i) \ge W_t^{i,\beta^i,T}(h_t^i)
		\end{gather}
\end{lemma}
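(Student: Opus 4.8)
The plan is to prove this by backward induction on $t$, using Lemma~\ref{thmfh1} and Lemma~\ref{lemcond} as the two workhorses. The induction runs from $t=T+1$ down to $t=1$. At the terminal stage the reward-to-go in~\eqref{eqr2gfh} reduces to $G^i(\underline{\Pi}_{T+1},X_{T+1}^i)$, which equals $V_{T+1}^{i,T}(\mu_{T+1}^\star[h_{T+1}^c],x_{T+1}^i)$, since $V_{T+1}^{i,T}\equiv G^i$ and the belief random variable $\underline{\Pi}_{T+1}$ is, under the belief system $\mu^\star$, the deterministic image $\mu_{T+1}^\star[h_{T+1}^c]$ of the realized public history. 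Hence equality (in particular the claimed inequality) holds at the base.

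For the inductive step I would assume the inequality at $t+1$ for every private history and every $\beta^i$. Starting from the right-hand side, first peel off the time-$t$ reward from~\eqref{eqr2gfh} and apply the tower property by conditioning on $(A_t,X_{t+1}^i)$:
\begin{equation*}
W_t^{i,\beta^i,T}(h_t^i) = \mE^{\beta^i,\beta^{-i,\star},\mu_t^\star[h_t^c]}\big[ R^i(X_t,A_t) + \delta\, \mE[\,\cdots \mid h_t^i,A_t,X_{t+1}^i\,] \mid h_t^i \big].
\end{equation*}
The inner conditional expectation is, by Lemma~\ref{lemcond}, unchanged if the driving belief is ``restarted'' at $\mu_{t+1}^\star[h_t^c,A_t]$ in place of $\mu_t^\star[h_t^c]$; once this is done it is exactly $W_{t+1}^{i,\beta^i,T}(H_{t+1}^i)$ with $H_{t+1}^i=(h_t^i,A_t,X_{t+1}^i)$. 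This identification is the crux of the argument and is precisely what Lemma~\ref{lemcond} is engineered to supply.

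With the future term recognized as $W_{t+1}^{i,\beta^i,T}$, I would invoke the inductive hypothesis $W_{t+1}^{i,\beta^i,T}(H_{t+1}^i)\le V_{t+1}^{i,T}(\mu_{t+1}^\star[H_{t+1}^c],X_{t+1}^i)$; since $\delta>0$ and this term sits inside a monotone expectation, it yields
\begin{equation*}
W_t^{i,\beta^i,T}(h_t^i) \le \mE^{\beta^i,\beta^{-i,\star},\mu_t^\star[h_t^c]}\big[ R^i(X_t,A_t) + \delta V_{t+1}^{i,T}(\mu_{t+1}^\star[H_{t+1}^c],X_{t+1}^i) \mid h_t^i \big].
\end{equation*}
Finally, using the belief update~\eqref{eqbelfh} together with the definition~\eqref{eqfhbeta} of $\beta_t^{j,\star}$, the restarted belief satisfies $\mu_{t+1}^\star[H_{t+1}^c]=\big[F(\mu_t^{j,\star}[h_t^c],\beta_t^{j,\star},A_t)\big]_{j=1}^N$, so the right-hand side above is verbatim the right-hand side of~\eqref{eqintlem}. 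Lemma~\ref{thmfh1} then bounds it by $V_t^{i,T}(\mu_t^\star[h_t^c],x_t^i)$, closing the induction.

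I expect the main obstacle to be the bookkeeping in the second step: carefully justifying that, after conditioning on $(A_t,X_{t+1}^i)$, the inner expectation genuinely matches the definition of $W_{t+1}^{i,\beta^i,T}$, including that the terminal belief $\underline{\Pi}_{T+1}$ propagates consistently under the two measures compared in Lemma~\ref{lemcond}. Everything else — peeling the discount factors and invoking Lemmas~\ref{thmfh1} and~\ref{lemcond} — is routine once this measure-change is in place.
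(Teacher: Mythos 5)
Your proposal is correct and follows essentially the same route as the paper: backward induction in which Lemma~\ref{lemcond} converts the restarted-belief expectation so the continuation term is recognized as $W_{t+1}^{i,\beta^i,T}$, the induction hypothesis replaces it by $V_{t+1}^{i,T}$, and Lemma~\ref{thmfh1} closes the step. The only cosmetic differences are that you run the chain of inequalities from $W_t$ up to $V_t$ rather than from $V_t$ down to $W_t$, and you anchor the induction at $t=T+1$ (where $W_{T+1}^{i,\beta^i,T}=G^i=V_{T+1}^{i,T}$) instead of at $t=T$ via the maximization property of~\eqref{eqfhalg}.
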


\begin{proof}
	We use backward induction for this. At time $ T $, using the maximization property from~\eqref{eqfhalg}, 
	\begin{subequations}
		\begin{align}
		&V_T^{i,T}(\mu_T^\star[h_T^c],x_T^i) 
		\\
		&\triangleq \mE^{\tilde{\gamma}_T^{i,T}(\cdot \mid x_T^i),\tilde{\gamma}_T^{-i,T},\mu_T^\star[h_t^c]} \big[ R^i(X_T,A_T) 
		\\ \nonumber 
		&+ \delta G^i\big( \big[F(\mu_T^{j,\star}[h_T^c],\tilde{\gamma}_T^{j,T},A_T) \big]_{j=1}^N ,X_{T+1}^i \big) \mid \mu_T^\star[h_T^c],x_T^i \big]
		\\
		&\ge \mE^{{\gamma}_T^{i,T}(\cdot \mid x_T^i),\tilde{\gamma}_T^{-i,T},\mu_T^\star[h_t^c]} \big[ R^i(X_T,A_T) 
		\\ \nonumber 
		&+ \delta G^i\big( \big[F(\mu_T^{j,\star}[h_T^c],\tilde{\gamma}_T^{j,T},A_T) \big]_{j=1}^N ,X_{T+1}^i \big) \mid \mu_T^\star[h_T^c],x_T^i \big]
		\\
		&= W_T^{i,\beta^i,T}(h_T^i)
		\end{align}
	\end{subequations}
Here the second inequality follows from~\eqref{eqfhalg} and~\eqref{eqfhalg2} and the final equality is by definition in~\eqref{eqr2gfh}.
	
	Assume that the result holds for all $ n \in \{t+1,\ldots,T\} $, then at time $ t $ we have
	\begin{subequations}
		\begin{align}
		&V_t^{i,T}(\mu_t^\star[h_t^c],x_t^i) 
		\\
		&\ge \mE^{\beta_t^i,\beta_t^{-i,\star},\mu_t^\star[h_t^c]} \big[ R^i(X_t,A_t) 
		\\ \nonumber 
		&+ \delta V_{t+1}^{i,T}\big( \big[F(\mu_t^{j,\star}[h_t^c],\beta_t^{j,\star},A_t)\big]_{j=1}^N , X_{t+1}^i \big) \mid h_t^i \big]
		\\
		&\ge \mE^{\beta_t^i,\beta_t^{-i,\star},\mu_t^\star[h_t^c]} \big[ R^i(X_t,A_t) 
		\\ \nonumber 
		&+ \delta \mE^{\beta^i_{t+1:T},\beta_{t+1:T}^{-i,\star},\mu_{t+1}^\star[h_t^c,A_t]} \big[ \sum_{n=t+1}^T \delta^{n-(t+1)} R^i(X_n,A_n) 
		\\ \nonumber 
		&+ \delta^{T-t} G^i(\Pi_{T+1},X_{T+1}^i) \mid h_t^i,A_t,X_{t+1}^i \big] \mid h_t^i \big]
		\\
		&= \mE^{\beta^i_{t:T},\beta^{-i,\star}_{t:T},\mu_t^\star[h_t^c]} \big[ \sum_{n=t}^T \delta^{n-t} R^i(X_n,A_n) 
		\\ \nonumber 
		&+ \delta^{T+1-t}G^i(\Pi_{T+1},X_{T+1}^i) \mid h_t^i \big]
		\\
		&= W_t^{i,\beta^i,T}(h_t^i)
		\end{align}
	\end{subequations}
	Here the first inequality follows from Lemma~\ref{thmfh1}, the second inequality from the induction hypothesis, the third equality follows from Lemma~\ref{lemcond} and the final equality by definition~\eqref{eqr2gfh}.
\end{proof}

\section{SPBE in Infinite Horizon} \label{secih}

In this section we consider the infinite horizon dynamic game, with naturally no terminal reward.



\subsection{Perfect Bayesian Equilibrium}

A perfect Bayesian equilibrium is the pair of strategy and belief $ (\beta^\star,\mu^\star) $, where
\begin{subequations}
\begin{align}
\beta^\star = \big( \beta_t^{i,\star} \big)_{t \ge 1,i \in \mN}  \qquad  \beta_t^{i,\star}:\mathcal{H}_t^i \rightarrow \Delta(\mA^i) \\
\mu^\star = \big( \mu_t^{i,\star} \big)_{t \ge 1,i \in \mN}  \qquad  \mu_t^{i,\star}:\mathcal{H}_t^i \rightarrow \Delta(\mathcal{H}_t) 
\end{align}
\end{subequations} 
such that sequential rationality is satisfied: $ \forall $ $ i \in \mN $, $ \beta^i $, $ t \ge 1 $, $ h_t^i \in \mathcal{H}_t^i $, 
\begin{multline}
\mE^{\beta^{i,\star},\beta^{-i,\star},\mu_t^\star[h_t^c]} \big[ \sum_{n=t}^\infty \delta^{n-t} R^i(X_n,A_n) \mid h_t^i \big] 
\\
\ge
\mE^{\beta^{i},\beta^{-i,\star},\mu_t^\star[h_t^c]} \big[ \sum_{n=t}^\infty \delta^{n-t} R^i(X_n,A_n) \mid h_t^i \big]
\end{multline}
and beliefs satisfy certain consistency conditions (please refer~\cite[pp.~331]{fudenbergtirole} for the exact conditions).

\subsection{Fixed Point Equation for Infinite Horizon}

In this section, we state the fixed-point equation that defines the value function and strategy mapping for the infinite horizon problem. This is analogous to the backwards recursion (\eqref{eqfhalg} and~\eqref{eqfhalg2}) that defined the value function and $ \theta $ mapping for the finite horizon problem.

Define the set of functions $ V^i: \times_{j=1}^N \Delta(\mX^j) \times \mX^i \rightarrow \mathbb{R} $ and strategies $ \tilde{\gamma}^i:\mX^i \rightarrow \Delta(\mA^i) $ (which is generated formally as $ \tilde{\gamma}^i = \theta^i[\upi] $ for given $ \upi $) via the following fixed-point equation: $ \forall $ $ i \in \mN $, $ x^i \in \mX^i $,
\begin{subequations} \label{eqihfpe}
\begin{multline}
\tilde{\gamma}^i(\cdot \mid x^i) \in \argmax_{\gamma^i(\cdot \mid x^i) \in \Delta(\mA^i)} \mE^{\gamma^i(\cdot \mid x^i),\tilde{\gamma}^{-i},\pi^{-i}} \big[ R^i(X,A) 
\\
+ \delta V^i\big([F(\pi^j,\tilde{\gamma}^j,A^j)]_{j=1}^N,X^{i,\prime}\big) \mid \upi,x^i \big],
\end{multline}
\begin{multline}
V^i(\upi,x^i) = \mE^{\tilde{\gamma}^i(\cdot \mid x^i),\tilde{\gamma}^{-i},\pi^{-i}} \big[ R^i(X,A) 
\\
+ \delta V^i\big([F(\pi^j,\tilde{\gamma}^j,A^j)]_{j=1}^N,X^{i,\prime}\big) \mid \upi,x^i \big].
\end{multline}
\end{subequations} 
Note that the above is a joint fixed-point equation in $ (V,\tilde{\gamma}) $, unlike the backwards recursive algorithm earlier which required solving a fixed-point equation only in $ \tilde{\gamma} $. Here the unknown quantity is distributed as
\begin{multline}
(X^{-i},A^i,A^{-i},X^{i,\prime}) \sim \pi^{-i}(x^{-i})\gamma^i(a^i \mid x^i)
\\
\tilde{\gamma}^{-i}(a^{-i} \mid x^{-i}) Q(x^{i,\prime} \mid x^i,a).
\end{multline}
and $ F(\cdot) $ is as defined in~\eqref{eqf}. 

Define belief $ \mu^\star $ inductively as follows: set $ \mu_1^\star(x_1) = \prod_{i=1}^N Q^i_0(x_1^i) $. Then for $ t \ge 1 $,
\begin{subequations} \label{eqmuih}
\begin{align}
\mu_{t+1}^{i,\star}\big[ h_{t+1}^c \big] 
&= F\big( \mu_{t}^{i,\star}[h_t^c], \theta^i\big[\mu_{t}^{\star}[h_t^c]\big], a_{t} \big)
\\
\mu_{t+1}^\star\big[ h_{t+1}^c \big](x_{t+1}) 
&= \prod_{i=1}^N \mu_{t+1}^{i,\star}\big[ h_{t+1}^c \big](x_{t+1}^i)
\end{align}
\end{subequations}
By construction the belief defined above satisfies the consistency condition needed for a Perfect Bayesian Equilibrium. Denote the stationary strategy arising out of $ \tilde{\gamma} $ by $ \beta^\star $ i.e.,
\begin{gather} \label{eqbeta}
\beta_t^{i,\star}(a_t^i \mid h_t^i) = \theta^i\big[\mu_t^{\star}[h_t^c]\big](a_t^i \mid x_t^i)
\end{gather}

\subsection{Relation between Infinite and Finite Horizon problems} 

The following result highlights the similarities between the fixed-point equation in infinite horizon and the backwards recursion in the finite horizon. 

\begin{lemma}\label{lemfh}
Consider the finite horizon game with $ G^i \equiv V^i $. Then $ V_t^{i,T} = V^i $,  $ \forall $ $ i \in \mN $, $ t \in \{1,\ldots,T\} $ satisfies the backwards recursive construction~\eqref{eqfhalg} and~\eqref{eqfhalg2}.
\end{lemma}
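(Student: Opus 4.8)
The plan is to prove the claim by backward induction on the time index $ t $, running from $ t = T+1 $ down to $ t = 1 $, exploiting the fact that the backwards recursion \eqref{eqfhalg}--\eqref{eqfhalg2} and the infinite-horizon fixed-point equation \eqref{eqihfpe} have an identical functional form once the continuation value in the former is frozen to $ V^i $. The only real content is the observation that both constructions are time-homogeneous: the per-stage reward $ R^i $, the transition kernel $ Q^i $, and the belief-update map $ F $ defined in \eqref{eqf} carry no explicit dependence on $ t $, so a single pair $ (V^i, \tilde{\gamma}^i) $ can serve at every stage.

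For the base case, the hypothesis $ G^i \equiv V^i $ together with step~1 of the algorithm gives $ V_{T+1}^{i,T} \equiv G^i \equiv V^i $ immediately, for every $ i \in \mN $. For the inductive step, suppose $ V_{t+1}^{i,T} = V^i $ for all $ i $. Substituting this into the right-hand side of the maximization \eqref{eqfhalg} and comparing term by term with the first line of \eqref{eqihfpe}, the two expressions coincide for each fixed belief argument $ \upi_t = \upi $: both maximize $ \mE[R^i(X,A) + \delta V^i([F(\pi^j,\tilde{\gamma}^j,A)]_{j=1}^N, X^{i,\prime})] $ over $ \gamma^i(\cdot \mid x^i) \in \Delta(\mA^i) $, under the same product distribution, because the continuation value, the stage reward, and the update $ F $ all agree. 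Hence the $ \tilde{\gamma} $-fixed-point solved at stage $ t $ is exactly the $ \tilde{\gamma} $-component of \eqref{eqihfpe}, and we may take $ \tilde{\gamma}_t^{i,T} = \tilde{\gamma}^i = \theta^i[\upi_t] $. Feeding this choice into \eqref{eqfhalg2} reproduces verbatim the second line of \eqref{eqihfpe}, which yields $ V_t^{i,T}(\upi_t, x_t^i) = V^i(\upi_t, x_t^i) $. This closes the induction.

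The main point to be careful about, rather than a genuine obstacle, is the distinction noted in the text between the two constructions: \eqref{eqfhalg} is a fixed-point only in $ \tilde{\gamma} $ given the continuation value, whereas \eqref{eqihfpe} is a \emph{joint} fixed-point in $ (V, \tilde{\gamma}) $. The induction hypothesis resolves this cleanly: once $ V_{t+1}^{i,T} $ has been pinned to $ V^i $, the stage-$ t $ problem is no longer a joint fixed-point but precisely the $ \tilde{\gamma} $-only fixed-point whose solution is already guaranteed to exist by the assumed solvability of \eqref{eqihfpe}. One should also remark that the identity holds for every admissible $ \upi_t $, so the equality $ V_t^{i,T} = V^i $ is an equality of functions on $ \times_{j=1}^N \Delta(\mX^j) \times \mX^i $, matching the common domain on which both value functions are defined; and that the strategy mapping $ \theta_t^i $ produced at each stage is the stationary $ \theta^i $, consistent with the belief recursions \eqref{eqbelfh} and \eqref{eqmuih} coinciding under this choice.
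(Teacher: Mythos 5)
Your proposal is correct and follows essentially the same route as the paper's own proof: backward induction on $t$, with the key observation that once the continuation value is pinned to $V^i$ by the induction hypothesis, the stage-$t$ equations \eqref{eqfhalg}--\eqref{eqfhalg2} become literally the infinite-horizon fixed-point equations \eqref{eqihfpe}, which the pair $(V,\tilde{\gamma})$ satisfies by assumption. Your explicit remark distinguishing the $\tilde{\gamma}$-only fixed point from the joint $(V,\tilde{\gamma})$ fixed point is a nice clarification but does not change the substance of the argument.
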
	
\begin{proof}
	Use backward induction for this. Consider the finite horizon algorithm at time $ t=T $, noting that $ V_{T+1}^{i,T} \equiv G^i \equiv  V^i $, 
	\begin{subequations} \label{eqfhT}
		\begin{multline}
		\tilde{\gamma}_T^{i,T}(\cdot \mid x_T^i) \in\!\!\!\!\! \argmax_{\gamma_T^i(\cdot \mid x_T^i) \in \Delta(\mA^i)} \!\!\! \mE^{\gamma_T^i(\cdot \mid x_T^i),\tilde{\gamma}_T^{-i,T},\upi_T^{-i}} \big[ R^i(X_T,A_T) 
		\\
		+ \delta V^i\big( \big[F(\pi_T^j, \tilde{\gamma}_T^{j,T}, A_T)\big]_{j=1}^N , X_{T+1}^i \big) \mid \upi_T,x_T^i \big]
		\end{multline}
		\begin{multline}
		V_T^{i,T}(\upi_T,x_T^i) = \mE^{\tilde{\gamma}_T^{i,T}(\cdot \mid x_T^i),\tilde{\gamma}_T^{-i,T},\upi_T^{-i}} \big[ R^i(X_T,A_T)
		\\ 
		+ \delta V^i\big( \big[F(\pi_T^j, \tilde{\gamma}_T^{j,T}, A_T)\big]_{j=1}^N , X_{T+1}^i \big) \mid \upi_T,x_T^i \big]
		\end{multline}
	\end{subequations}
	Comparing the above set of equations with~\eqref{eqihfpe}, we can see that the pair $ (V,\tilde{\gamma}) $ arising out of~\eqref{eqihfpe} satisfies the above. Now assume that $ V_n^{i,T} \equiv V^i $ for all $ n \in \{t+1,\ldots,T\} $. At time $ t $, in the finite horizon construction from~\eqref{eqfhalg},~\eqref{eqfhalg2}, substituting $ V^i $ in place of $ V_{t+1}^{i,T} $ from the induction hypothesis, we get the same set of equations as~\eqref{eqfhT}. Thus $ V_t^{i,T} \equiv V^i $ satisfies it.
\end{proof}

\subsection{Equilibrium Result}

Below we state the central result of this paper. It states that the strategy-belief pair $ (\beta^\star,\mu^\star) $ constructed from the solution of the fixed-point equation~\eqref{eqihfpe} and the forward recursion of~\eqref{eqmuih} and~\eqref{eqbeta} indeed constitutes a PBE. 

\begin{theorem}\label{thih}
	Assuming that the fixed-point equation~\eqref{eqihfpe} admits an absolutely bounded solution $ V^i $ (for all $ i \in \mN $), the strategy-belief pair $ (\beta^\star,\mu^\star) $ defined in~\eqref{eqmuih} and~\eqref{eqbeta} is a PBE of the infinite horizon discounted reward dynamic game i.e., $ \forall $ $ i \in \mN $, $ \beta^i $, $ t \ge 1 $, $ h_t^i \in \mathcal{H}_t^i $,
	\begin{multline}
	\mE^{\beta^{i,\star},\beta^{-i,\star},\mu_t^\star[h_t^c]} \big[ \sum_{n=t}^\infty \delta^{n-t} R^i(X_n,A_n) \mid h_t^i \big] 
	\\
	\ge
	\mE^{\beta^{i},\beta^{-i,\star},\mu_t^\star[h_t^c]} \big[ \sum_{n=t}^\infty \delta^{n-t} R^i(X_n,A_n) \mid h_t^i \big]
	\end{multline}
\end{theorem}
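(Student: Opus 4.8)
The plan is to obtain the infinite-horizon sequential rationality inequality as a limit of the finite-horizon bound from Lemma~\ref{thmfh2}, using Lemma~\ref{lemfh} to tie the two horizons together. First I would instantiate the finite-horizon game with terminal reward $ G^i \equiv V^i $, where $ V^i $ is the absolutely bounded solution of the fixed-point equation~\eqref{eqihfpe}. By Lemma~\ref{lemfh} this choice forces $ V_t^{i,T} \equiv V^i $ for every $ t \in \{1,\ldots,T\} $, and consequently the backwards recursion is solved by the same stationary strategy mapping $ \theta^i $ that appears in~\eqref{eqihfpe}. This is the crucial consistency observation: it guarantees that the forward belief recursion~\eqref{eqbelfh} and the strategies $ \beta^{-i,\star} $ entering the finite-horizon reward-to-go $ W_t^{i,\beta^i,T} $ of~\eqref{eqr2gfh} are, on $ \{1,\ldots,T\} $, identical to the objects in the infinite-horizon statement. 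Combining $ V_t^{i,T}\equiv V^i $ with Lemma~\ref{thmfh2} then yields, for every $ \beta^i $ and every $ T $,
\begin{gather*}
V^i(\mu_t^\star[h_t^c],x_t^i) \ge W_t^{i,\beta^i,T}(h_t^i).
\end{gather*}

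Next I would establish the matching equality for the candidate strategy itself, namely $ V^i(\mu_t^\star[h_t^c],x_t^i) = W_t^{i,\beta^{i,\star},T}(h_t^i) $. This is obtained by tracing the proof of Lemma~\ref{thmfh2} with $ \beta^i = \beta^{i,\star} $: the invocation of Lemma~\ref{thmfh1} becomes an equality because its right-hand side then reduces to the definition~\eqref{eqfhalg2} of $ V_t^{i,T} $, and the induction step inherits equality from the base case at time $ T $, so the entire chain of inequalities collapses to equalities.

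The final and main step is the passage $ T \to \infty $. Writing out
\begin{gather*}
W_t^{i,\beta^i,T}(h_t^i) = \mE^{\beta^i,\beta^{-i,\star},\mu_t^\star[h_t^c]} \big[ \textstyle\sum_{n=t}^T \delta^{n-t} R^i(X_n,A_n) + \delta^{T+1-t} V^i(\underline{\Pi}_{T+1},X_{T+1}^i) \mid h_t^i \big],
\end{gather*}
I would argue that the terminal term vanishes, since $ \vert V^i \vert $ is bounded by some constant $ M $ and $ \delta^{T+1-t} M \to 0 $ as $ \delta \in (0,1) $, and that the truncated reward streams converge to the infinite discounted sum. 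The latter is justified by dominated convergence: $ R^i $ is bounded (because $ \mX^i,\mA^i $ are finite), so the partial sums are dominated by the summable envelope $ \Vert R^i\Vert_\infty/(1-\delta) $. Hence $ W_t^{i,\beta^i,T}(h_t^i) \to \mE^{\beta^i,\beta^{-i,\star},\mu_t^\star[h_t^c]}[\textstyle\sum_{n=t}^\infty \delta^{n-t} R^i(X_n,A_n) \mid h_t^i] $. Taking limits in the inequality (for arbitrary $ \beta^i $) and in the equality (for $ \beta^{i,\star} $) gives
\begin{gather*}
\mE^{\beta^{i,\star},\beta^{-i,\star},\mu_t^\star[h_t^c]}\big[ \textstyle\sum_{n=t}^\infty \delta^{n-t}R^i(X_n,A_n) \mid h_t^i\big] = V^i(\mu_t^\star[h_t^c],x_t^i) \\
\ge \mE^{\beta^i,\beta^{-i,\star},\mu_t^\star[h_t^c]}\big[ \textstyle\sum_{n=t}^\infty \delta^{n-t}R^i(X_n,A_n) \mid h_t^i\big],
\end{gather*}
which is exactly the claimed sequential rationality; belief consistency holds by the construction of $ \mu^\star $ noted after~\eqref{eqbeta}.

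I expect the main obstacle to be the justification of the limit interchange rather than any single algebraic manipulation: one must confirm both that the belief and strategy objects in the finite-horizon reward-to-go genuinely coincide with the infinite-horizon ones (so the two bounds concern the same quantities) and that the terminal contribution $ \delta^{T+1-t}V^i $ decays while the truncated reward stream converges. Both hinge on the two boundedness facts — the assumed absolute boundedness of $ V^i $ and the automatic boundedness of $ R^i $ — together with $ \delta < 1 $.
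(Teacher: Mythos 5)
Your proposal is correct, and its first half coincides with the paper's argument: for the inequality against an arbitrary $\beta^i$, the paper also instantiates the finite-horizon game with $G^i \equiv V^i$, combines Lemma~\ref{lemfh} with Lemma~\ref{thmfh2} to get $V^i(\mu_t^\star[h_t^c],x_t^i) \ge W_t^{i,\beta^i,T}(h_t^i)$, and kills the gap between $W_t^{i,\beta^i,T}$ and the infinite-horizon reward-to-go by a $\delta^{T+1-t}\times(\text{bounded})$ estimate as $T\to\infty$ (your dominated-convergence phrasing and the paper's two-bracket decomposition in~\eqref{eqdc} are the same estimate). Where you genuinely diverge is in showing that $\beta^{i,\star}$ attains $V^i$. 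The paper never returns to the finite horizon for this: it exploits the structured form of $\beta^\star$ to write $W_t^{i,\beta^{i,\star}}$ as a function of $(\mu_t^\star[h_t^c],x_t^i)$, notes that $W^{i,\beta^{i,\star}}$ and $V^i$ satisfy the same one-step recursion, iterates it $n$ times, and concludes $\sup_{h_t^i}\vert W_t^{i,\beta^{i,\star}} - V^i\vert \le \delta^n\cdot(\text{bounded}) \to 0$ --- a verification-theorem/contraction argument. You instead re-run the proof of Lemma~\ref{thmfh2} with $\beta^i=\beta^{i,\star}$, observing that the appeal to Lemma~\ref{thmfh1} becomes an equality (its right-hand side is then literally the definition~\eqref{eqfhalg2}, since $\beta_t^{i,\star}$ evaluated at $h_t^i$ is $\theta^i[\mu_t^\star[h_t^c]](\cdot\mid x_t^i)$ and Lemma~\ref{lemcond} is already an equality), giving $V^i = W_t^{i,\beta^{i,\star},T}$ for every $T$, and then pass to the same limit as before. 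Both routes are sound. Yours buys uniformity --- Parts 1 and 2 become one limiting argument, applied once with $\ge$ and once with $=$ --- at the cost of opening up Lemma~\ref{thmfh2} rather than using it as a black box; the collapse-to-equality step does require checking (as you implicitly do) that under $\mu_t^\star$ and the stationary $\theta$ the conditional law of $(X_t^{-i},A_t,X_{t+1}^i)$ given $h_t^i$ matches the one in~\eqref{eqfhalg2}. The paper's route keeps the finite-horizon lemmas untouched and makes visible that the equality half is really a uniqueness/contraction property of the discounted recursion; your ``crucial consistency observation'' (that the backwards recursion can be solved by the stationary $\theta^i$) is equally indispensable in the paper, where it is exactly the content of Lemma~\ref{lemfh}.
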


\paragraph*{Remark} Note that by definition in~\eqref{eqbelfh}, $ \mu^\star $ already satisfies the consistency conditions required for perfect Bayesian equilibrium.

\begin{proof}
We divide the proof into two parts: first we show that the value function $ V^i $ is at least as big as any reward-to-go function; secondly we show that under the strategy $ \beta_i^\star $, reward-to-go is $ V^i $.	
	
\paragraph*{Part 1}
For any $ i \in \mN $, $ \beta^i $ define the following reward-to-go functions
\begin{subequations} \label{eqihr2g}
\begin{gather}
W_t^{i,\beta^i}(h_t^i) = \mE^{\beta^i,\beta^{-i,\star},\mu_t^\star[h_t^c]} \big[ \sum_{n=t}^\infty \delta^{n-t} R^i(X_n,A_n) \mid h_t^i \big] 
\end{gather}
\begin{multline}
W_t^{i,\beta^i,T}(h_t^i) = \mE^{\beta^i,\beta^{-i,\star},\mu_t^\star[h_t^c]} \big[ \sum_{n=t}^T \delta^{n-t} R^i(X_n,A_n) 
\\
+ \delta^{T+1-t} V^{i}(\underline{\Pi}_{T+1},X^i_{T+1}) \mid h_t^i \big]
\end{multline}
\end{subequations} 
Since $ \mX^i,\mA^i $ are finite sets the reward $ R^i $ is absolutely bounded, the reward-to-go $ W_t^{i,\beta^i}(h_t^i) $ is finite $ \forall $ $ i,t,\beta^i,h_t^i $. 
	
For any $ i \in \mN $, $ h_t^i \in \mathcal{H}_t^i $,
\begin{multline}
\label{eqdc} 
V^i\big(\mu_t^\star[h_t^c],x_t^i\big) - W_t^{i,\beta^i}(h_t^i) 
= 
\\
\Big( V^i\big(\mu_t^\star[h_t^c],x_t^i\big) - W_t^{i,\beta^i,T}(h_t^i) \Big) 
+ \Big( W_t^{i,\beta^i,T}(h_t^i) - W_t^{i,\beta^i}(h_t^i) \Big)
\end{multline}
Combining results from Lemma~\ref{thmfh2} and~\ref{lemfh}, the term in the first bracket in RHS of~\eqref{eqdc} is non-negative. Using~\eqref{eqihr2g}, the term in the second bracket is 
\begin{multline} \label{eqdiff}
\left( \delta^{T+1-t} \right) \mE^{\beta^i,\beta^{-i,\star},\mu_t^\star[h_t^c]} \big[- \sum_{n=T+1}^\infty \delta^{n-(T+1)} R^i(X_n,A_n) 
\\
+  V^{i}(\underline{\Pi}_{T+1},X^i_{T+1}) \mid h_t^i \big].
\end{multline}
The summation in the expression above is bounded by a convergent Geometric series. Also, $ V^i $ is bounded. Hence the above quantity can be made arbitrarily small by choosing $ T $ appropriately large. Since the LHS of~\eqref{eqdc} does not depend on $ T $, this gives that 
\begin{gather}
V^i\big(\mu_t^\star[h_t^c],x_t^i\big) \ge W_t^{i,\beta^i}(h_t^i) 
\end{gather}

\paragraph*{Part 2}
Since the strategy $ \beta^\star $ generated in~\eqref{eqbeta} is such that $\beta^{i,\star}_t $ depends on $ h_t^i $ only through $ \mu_t^\star[h_t^c] $ and $ x_t^i $, the reward-to-go $ W_t^{i,\beta^{i,\star}} $, at strategy $ \beta^\star $, can be written (with abuse of notation) as
\begin{multline}
W_t^{i,\beta^{i,\star}}(h_t^i) = W_t^{i,\beta^{i,\star}}(\mu_t^\star[h_t^c],x_t^i) 
\\
= \mE^{\beta^{\star},\mu_t^\star[h_t^c]} \big[ \sum_{n=t}^\infty \delta^{n-t} R^i(X_n,A_n) \mid \mu_t^\star[h_t^c],x_t^i \big]
\end{multline}

For any $ h_t^i \in \mathcal{H}_t^i $,
\begin{subequations}
\begin{multline}
W_t^{i,\beta^{i,\star}}(\mu_t^\star[h_t^c],x_t^i) 
= \mE^{\beta^{\star},\mu_t^\star[h_t^c]} \big[ R^i(X_t,A_t) + \delta W_{t+1}^{i,\beta^{i,\star}}
\\
\big(\big[F(\mu_t^{j,\star}[h_t^c],\theta^i[\mu_t^\star[h_t^c]],A_{t+1}^j)\big]_{j=1}^N,X_{t+1}^i\big)  \mid \mu_t^\star[h_t^c],x_t^i \big]
\end{multline}
\begin{multline}
V^{i}(\mu_t^\star[h_t^c],x_t^i) 
= \mE^{\beta^{\star},\mu_t^\star[h_t^c]} \big[ R^i(X_t,A_t) + \delta V^{i}
\\
\big(\big[F(\mu_t^{j,\star}[h_t^c],\theta^i[\mu_t^\star[h_t^c]],A_{t+1}^j)\big]_{j=1}^N,X_{t+1}^i\big)  \mid \mu_t^\star[h_t^c],x_t^i \big]
\end{multline}
\end{subequations}
Repeated application of the above for the first $ n $ time periods gives
\begin{subequations}
\begin{multline}
W_t^{i,\beta^{i,\star}}(\mu_t^\star[h_t^c],x_t^i) 
= \mE^{\beta^{\star},\mu_t^\star[h_t^c]} \big[ \sum_{m=t}^{t+n-1} \delta^{m-t} R^i(X_t,A_t)  
\\
+ \delta^{n}  W_{t+n}^{i,\beta^{i,\star}}\big(\Pi_{t+n},X_{t+n}^i\big)  \mid \mu_t^\star[h_t^c],x_t^i \big]
\end{multline}
\begin{multline}
V^{i}(\mu_t^\star[h_t^c],x_t^i) 
= \mE^{\beta^{\star},\mu_t^\star[h_t^c]} \big[ \sum_{m=t}^{t+n-1} \delta^{m-t} R^i(X_t,A_t)  
\\
+ \delta^{n}  V^{i}\big(\Pi_{t+n},X_{t+n}^i\big)  \mid \mu_t^\star[h_t^c],x_t^i \big]
\end{multline}
\end{subequations}
Here $ \Pi_{t+n} $ is the $ n-$step belief update under strategy and belief prescribed by $\beta^\star,\mu^\star$. 

Taking difference gives
\begin{subequations}
\begin{multline}
W_t^{i,\beta^{i,\star}}(\mu_t^\star[h_t^c],x_t^i)  - V^{i}(\mu_t^\star[h_t^c],x_t^i) 
\\
= \delta^n \mE^{\beta^{\star},\mu_t^\star[h_t^c]} \big[ W_{t+n}^{i,\beta^{i,\star}}\big(\Pi_{t+n},X_{t+n}^i\big) 
\\
- V^{i}\big(\Pi_{t+n},X_{t+n}^i\big) \mid \mu_t^\star[h_t^c],x_t^i \big]
\end{multline}
\end{subequations}
Taking absolute value of both sides then using Jensen's inequality for $ f(x) = \vert x \vert $ and finally taking supremum over $ h_t^i $ gives us
\begin{multline}
\sup_{h_t^i} \big\vert W_t^{i,\beta^{i,\star}}(\mu_t^\star[h_t^c],x_t^i)  - V^{i}(\mu_t^\star[h_t^c],x_t^i) \big\vert  
\\
\le 
\delta^n \sup_{h_t^i}  \mE^{\beta^{\star},\mu_t^\star[h_t^c]} \big[\big\vert W_{t+n}^{i,\beta^{i,\star}}(\Pi_{t+n},X_{t+n}^i)  
\\
- V^{i}(\mu_t^\star[h_t^c],x_t^i) \big\vert  \mid \mu_t^\star[h_t^c],x_t^i \big]
\end{multline}
Now using the fact that $ W_{t+n},V^i $ are bounded and that we can choose $ n $ arbitrarily large, we get $ \sup_{h_t^i} \vert W_t^{i,\beta^{i,\star}}(\mu_t^\star[h_t^c],x_t^i)  - V^{i}(\mu_t^\star[h_t^c],x_t^i) \vert = 0 $. 
\end{proof}

\section{A Concrete Example} \label{secexample}

In this section, we  consider an infinite horizon version of the public goods example from~\cite[ch.~8, Example 8.3]{fudenbergtirole}. We solve the corresponding fixed point equation (arising out of~\eqref{eqihfpe})  numerically to calculate the mapping $ \theta $ (which in turn generates the perfect Bayesian equilibrium $ (\beta^\star,\mu^\star) $).

The example consists of two symmetric agents. The type space and action sets are $ \mX^1 = \mX^2 = \{x^H,x^L\} $ and $ \mA^1 = \mA^2 = \{0,1\} $. Each agents' type is static and does not vary with time.

The actions represents whether agents are willing to contribute for a common public good. If at least one agent contributes then both agents receive utility $ 1 $ and the agent(s) that contributed receive cost equal to their type. If no one contributes then both agents receive utility $ 0 $. Thus the reward function is
\begin{gather}
R^i(x,a) = \left\{
\begin{array}{ll}
1-x^i & \mbox{if } a^i = 1 
\\
a^{-i} & \mbox{if } a^i = 0
\end{array}
\right.
\end{gather}
where $ a^{-i} $ represents the action taken by the agent other than $ i $. 

We use the following values $ x^H = 1.2 $, $ x^L = 0.2 $ and consider three values $ \delta = 0,0.5,0.95 $. Since type sets have two elements we can represent the distribution $ \pi_1(\cdot) \in \Delta(\mX^1) $ with only $ \pi_1(x^H) \in [0,1] $, similarly for agent $ 2 $. For any $ \upi = (\pi_1,\pi_2) \in [0,1]^2 $, the mapping $ \theta[\upi] $ produces $ \tilde{\gamma}^i(\cdot \mid x^i) $ for every $ i \in \{1,2\} $ and $ x^i \in \mX^i = \{x^H,x^L\} $. 
%
Since the action space contains two elements, we can represent the distribution $ \tilde{\gamma}^i(\cdot \mid x^i) $ by $ \tilde{\gamma}^i(a^i = 1 \mid x^i) $ i.e., the probability of taking action $ 1 $. We solve the fixed-point equation by discretizing the $ \upi- $space $ [0,1]^2 $ and all solutions that we find are symmetric w.r.t. agents i.e., $ \tilde{\gamma}^1(\cdot \mid x^L) $ for $ \upi=(\pi_1,\pi_2) $ is the same as $ \tilde{\gamma}^2(\cdot \mid x^L) $ for $ \upi^\prime = (\pi_2,\pi_1) $ and similarly for type $ x^H $.


For $ \delta = 0 $, the game is instantaneous and for the values considered, we have $ 1 - x^H = -0.2 < 0 $. This implies that whenever agent $ 1 $'s type is $ x^H $, it is instantaneously profitable not to contribute. This gives $ \tilde{\gamma}^1(a^1=1 \mid x^H) = 0 $, for all $ \upi $. Thus we only plot $ \tilde{\gamma}^1(a^1=1 \mid x^L) $; in Fig.~\ref{fig:del0_p1L}. For $ \delta=0 $ the fixed-point equation~\eqref{eqihfpe} is only for the variable $ \tilde{\gamma} $ and not $ V $, and can be solved analytically. Refer to~\cite[eq.~(20) and Fig.~(1)]{vasal2015aa}, where this solution is stated. There are multiple solutions to the fixed-point equation and our result from Fig.~\ref{fig:del0_p1L} matches with the one of the results in~\cite{vasal2015aa}.

Intuitively, with type $ x^L $ the only value of $ \upi $ for which agent 1 would not wish to contribute is if he anticipates agent $ 2 $'s type to be $ x^L $ with high probability and rely on agent 2 to contribute. This is why for lower values of $ \pi_2 $ (i.e., agent $ 2 $'s type likely to be $ x^L $) we see $ \tilde{\gamma}^1(a^1 = 1 \mid x^L) = 0 $ in Fig.~\ref{fig:del0_p1L}.

Now consider $ \tilde{\gamma}^1(a^1=1 \mid x^L) $ plotted in Fig.~\ref{fig:del0_p1L},~\ref{fig:del05_p1L} and~\ref{fig:del095_p1L}. As $ \delta $ increases, future rewards attain more priority and signaling comes into play. So while taking an action, agents not only look for their instantaneous reward but also how their action affects the future public belief $ \pi $ about their private type. It is evident in the figures that as $ \delta $ increases, at high $ \pi_1 $, up to larger values of $ \pi_2 $ agent $ 1 $ chooses not to contribute when his type is $ x^L $. This way he intends to send a ``wrong'' signal to agent $ 2 $ i.e., that his type is $ x^H $ and subsequently force agent $ 2 $ to invest. This way agent 1 can free-ride on agent $ 2 $'s investment.

Now consider Fig.~\ref{fig:del05_p1H} and~\ref{fig:del095_p1H}, where $ \tilde{\gamma}^1(a^1 = 1 \mid x^H) $ is plotted. Coordination via signaling is evident here. Although it is instantaneously not profitable to contribute if agent $ 1 $'s type is $ x^H $, by contributing at higher values of $ \pi_2 $ (i.e., agent $ 2 $'s type is likely $ x^H $) and low $ \pi_1 $, agent $ 1 $ coordinates with agent $ 2 $ to achieve net profit greater than $ 0 $ (reward when no one contributes). This can be done since the loss of contributing is $ -0.2 $ whereas profit from free-riding on agent $ 2 $'s contribution is $ 1 $. 

Under the equilibrium strategy, beliefs $ \underline{\Pi}_t $ form a Markov chain. One can trace this Markov chain to study the signaling effect at equilibrium. On numerically simulating this Markov chain for the above example (at $ \delta= 0.95 $) we observe that for almost all initial beliefs, within a few rounds agents completely learn each other's private type truthfully (or at least with very high probability). In other words, agents manage to reveal their private type via their actions at equilibrium and to such an extent that it negates any possibly incorrect initial belief about their type.

As a measure of cooperative coordination at equilibrium one can perform the following calculation. Compare the value function $ V^1(\cdot,x) $ of agent $ 1 $ arising out of the fixed-point equation, for $ \delta = 0.95 $ and $ x \in \{x^H,x^L\} $  (normalize it by multiplying with $ 1-\delta $ so that it represents per-round value) with the best possible attainable single-round reward under a symmetric mixed strategy with a) full coordination and b) no coordination. Note that the two cases need not be equilibrium themselves, which is why this will result in a bound on the efficiency of the evaluated equilibria.

In case a), assuming both agents have the same type $ x $, full coordination can lead to the best possible reward of $ \frac{1+1-x}{2} = 1- \frac{x}{2} $ i.e., agent $ 1 $ contributes with probability $ 0.5 $ and agent $ 2 $ contributes with probability $ 0.5 $ but in a coordinated manner so that it doesn't overlap with agent $ 1 $ contributing. 

In case b) when agents do not coordinate and invest with probability $ p $ each, then the expected single-round reward is $ p(1-x) + p(1-p) $. The maximum possible value of this expression is $ (1-\frac{x}{2})^2 $.

For $ x=x^L=0.2 $, the range of values of $ V^1(\pi_1,\pi_2,x^L) $ over $ (\pi_1,\pi_2) \in [0,1]^2 $ is $ [0.865,0.894] $. 
Whereas full coordination produces $ 0.9 $ and no coordination $ 0.81 $. It is thus evident that agents at equilibrium end up achieving reward close to the best possible and gain significantly compared to the strategy of no coordination.

Similarly for $ x = x^H = 1.2 $ the range is $ [0.3,0.395] $. Whereas full coordination produces $ 0.4 $ and no coordination $ 0.16 $. The gain via coordination is evident here too. 


\begin{figure}[!ht] 
	\centering
	\includegraphics[width=0.5\textwidth]{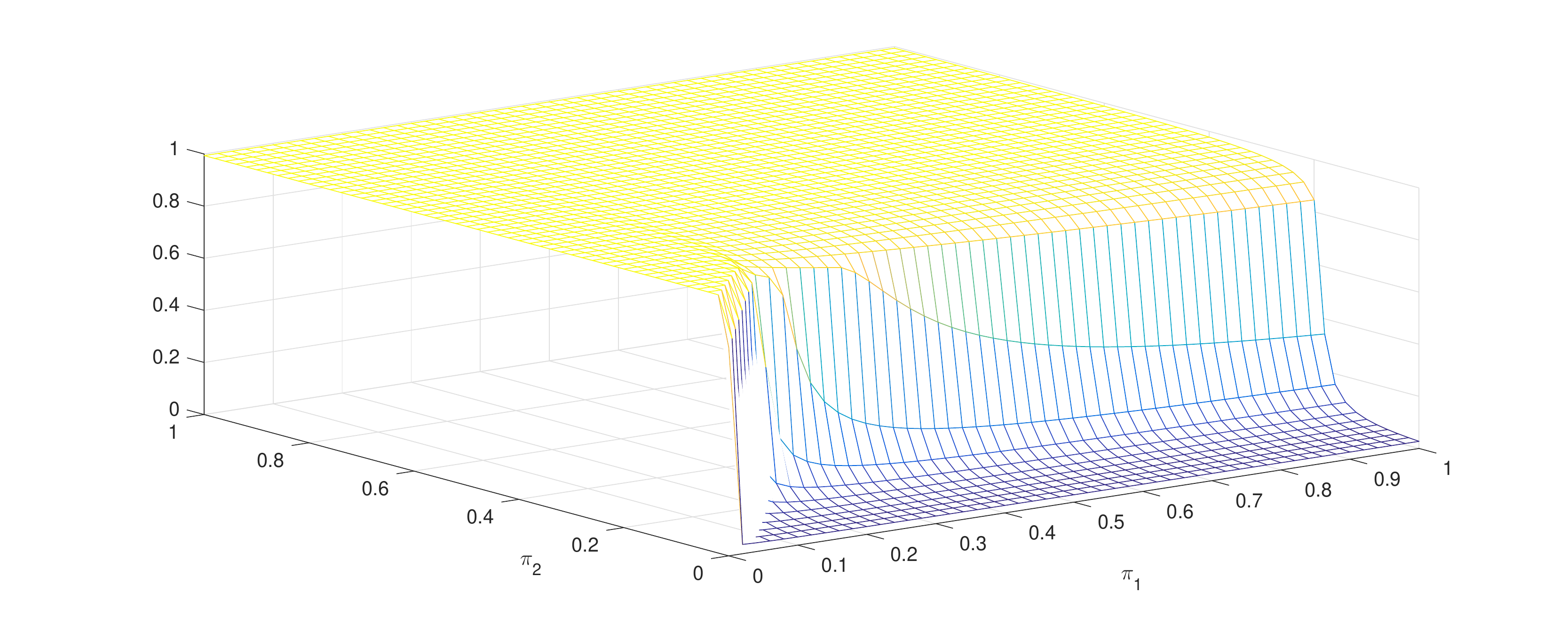}
	\caption{$ \tilde{\gamma}^1(a^1 = 1 \mid x^L) $ vs. $ (\pi_1,\pi_2) $ at $ \delta = 0 $.} 
	\label{fig:del0_p1L}
\end{figure}

\begin{figure}[!ht] 
	\centering
	\includegraphics[width=0.5\textwidth]{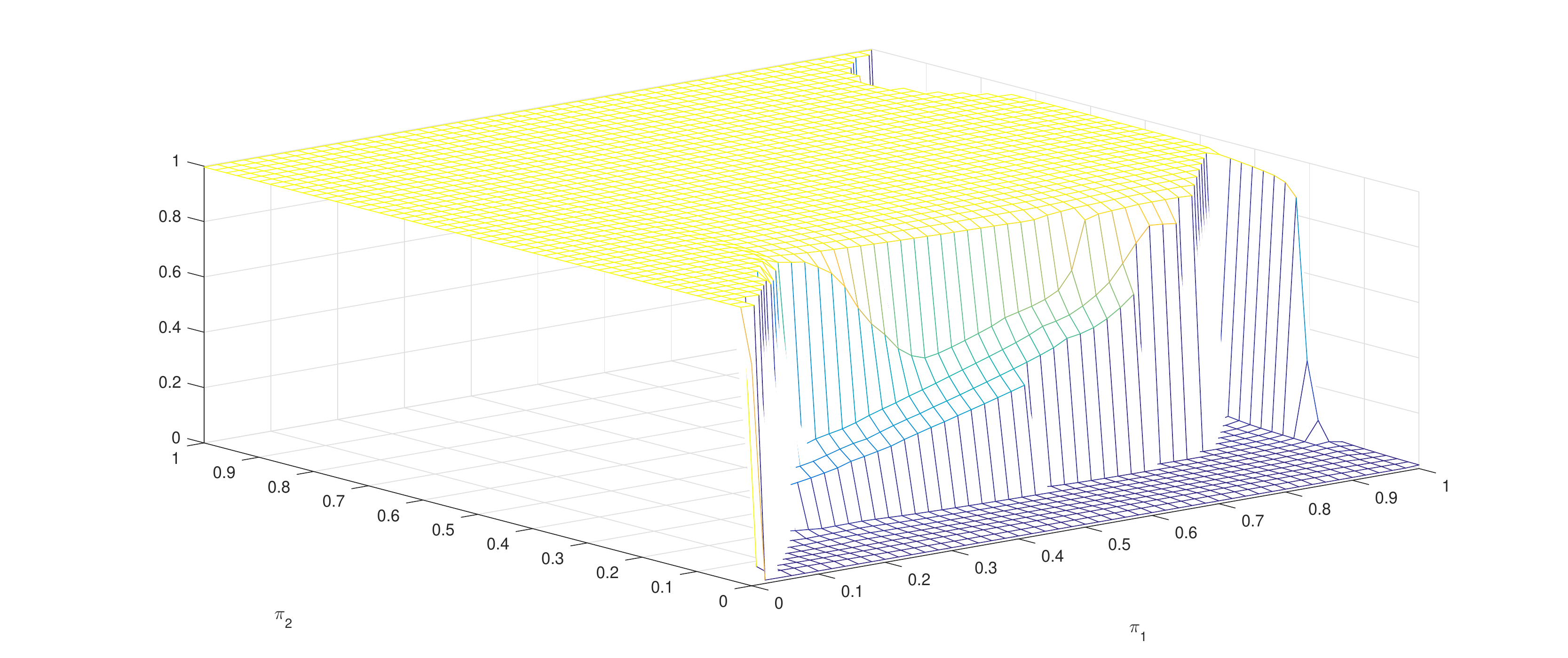}
	\caption{$ \tilde{\gamma}^1(a^1 = 1 \mid x^L) $ vs. $ (\pi_1,\pi_2) $ at $ \delta = 0.5 $.} 
	 \label{fig:del05_p1L}
\end{figure}

\begin{figure}[!ht] 
	\centering
	\includegraphics[width=0.5\textwidth]{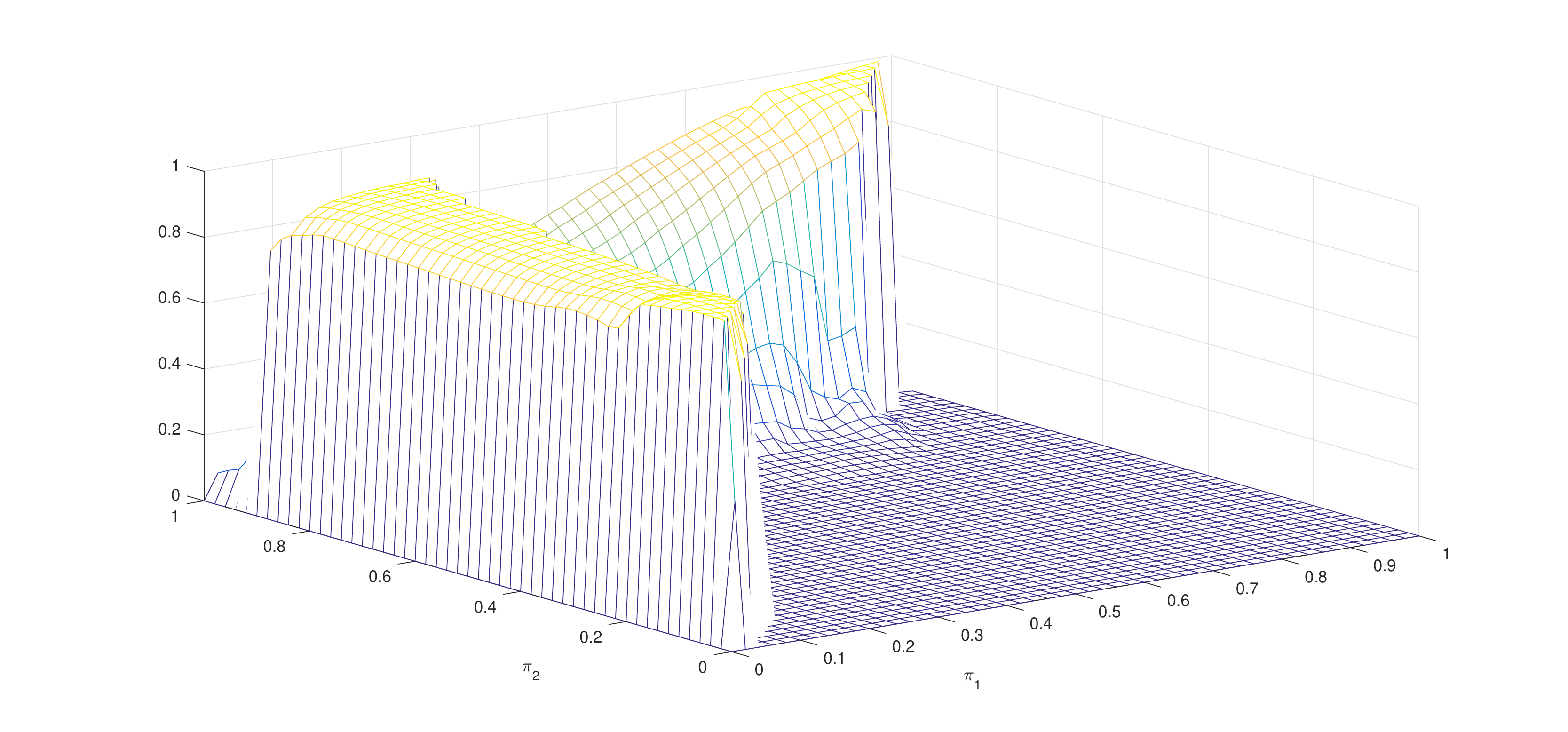}
	\caption{$ \tilde{\gamma}^1(a^1 = 1 \mid x^H) $ vs. $ (\pi_1,\pi_2) $ at $ \delta = 0.5 $.} 
	\label{fig:del05_p1H}
\end{figure}

\begin{figure}[!ht]  
	\centering
	\includegraphics[width=0.5\textwidth]{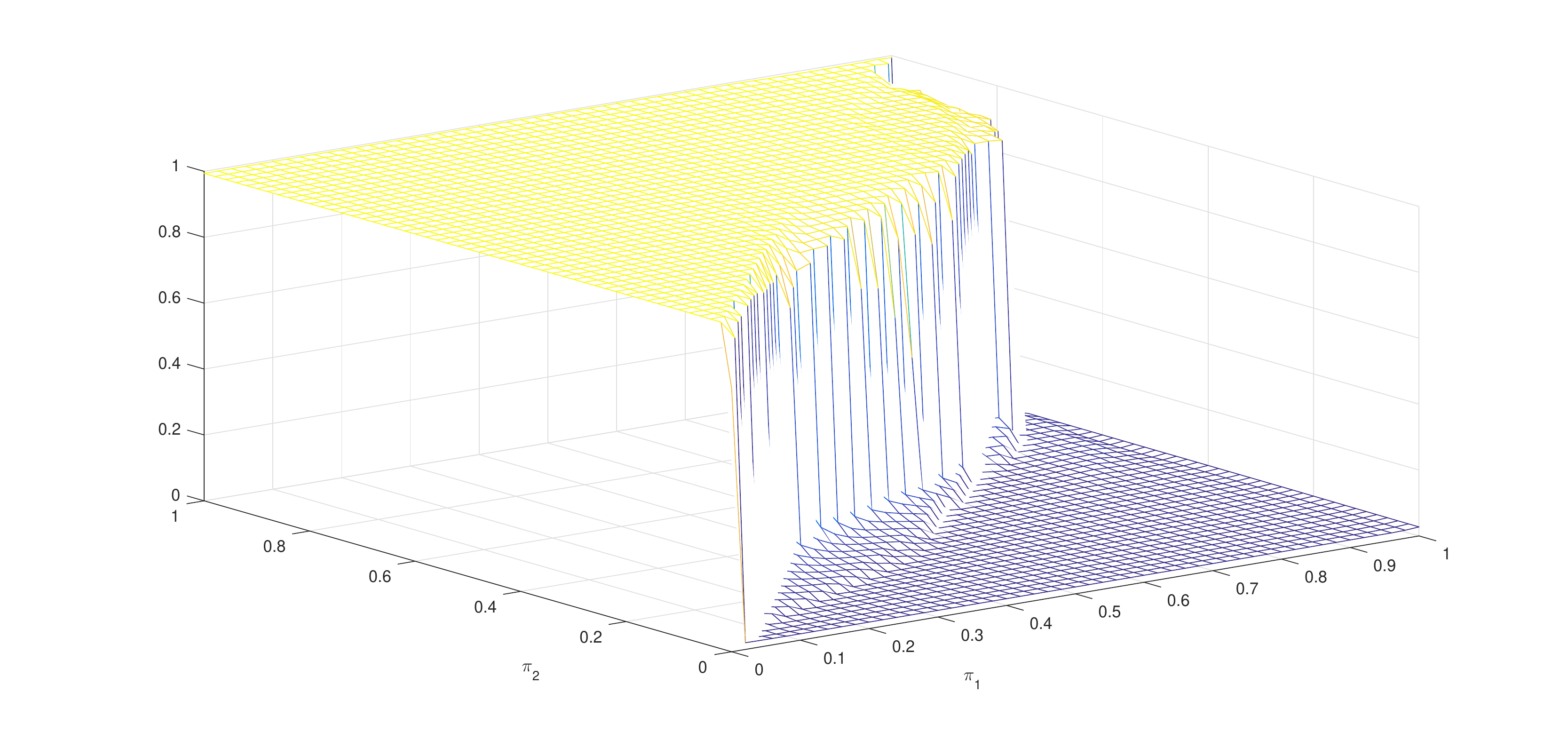}
	\caption{$ \tilde{\gamma}^1(a^1 = 1 \mid x^L) $ vs. $ (\pi_1,\pi_2) $ at $ \delta = 0.95 $.} 
	\label{fig:del095_p1L}
\end{figure}

\begin{figure}[!ht]  
	\centering
	\includegraphics[width=0.5\textwidth]{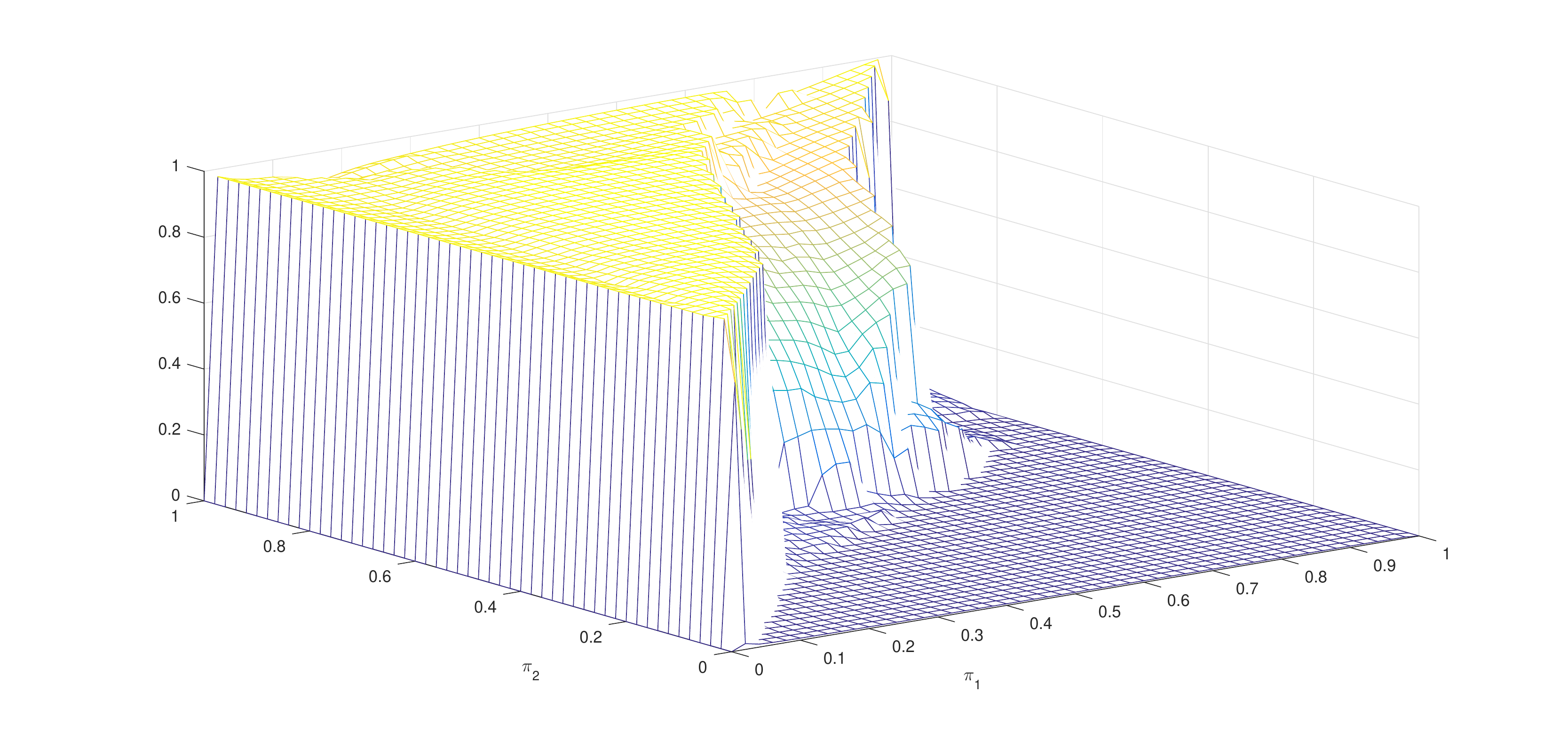}
	\caption{$ \tilde{\gamma}^1(a^1 = 1 \mid x^H) $ vs. $ (\pi_1,\pi_2) $ at $ \delta = 0.95 $.} 
	\label{fig:del095_p1H}
\end{figure}

\section{Conclusion} 

This paper considers the infinite horizon discounted reward dynamic game with private types i.e., where each agent can only observe their own type. The types evolve as a controlled Markov process and are conditionally independent across agents given the action profile. Asymmetry of information between agents exists in this model, since each agent only knows their own private type.

To date, there exists no universal algorithm for calculating PBE in models with asymmetry of information that decouples, w.r.t. time, the calculation of strategy. Section~\ref{secih} provides a single-shot fixed-point equation for calculating the equilibrium generating function $ \theta $, which in conjunction with the forward recursion in~\eqref{eqmuih} and~\eqref{eqbeta} gives a subset of PBEs $ (\beta^\star,\mu^\star) $ of this game. The method proposed in this paper finds PBE of a certain type i.e., where for any agent $ i $, his strategy at equilibrium depends on the public history only through the common belief $ \upi_t $ and on private history only through agent $ i $'s current private type $ x_t^i $.  

Finally, we demonstrate our methodology by a concrete example of a two agent symmetric public goods game and observe the signaling effect in agents' strategies at equilibrium as discount factor is increased. The signaling effect implies that agents take into account how their actions affect future public beliefs $ \upi $ about their private type.

One important direction for future work is characterization of games where the proposed SPBE exists. This boils down to existence of a solution to the fixed-point equation in Section~\ref{secih}, as any SPBE must satisfy this equation.


%

\end{document}